\newcommand{\xx}{\mbox{\boldmath $x$}}
\newcommand{\vv}{\mbox{\boldmath $v$}}
\newcommand{\pp}{\mbox{\boldmath $p$}}
\newcommand{\bb}{\mbox{\boldmath $b$}}
\newcommand{\cc}{\mbox{\boldmath $c$}}
\newcommand{\dd}{\mbox{\boldmath $d$}}
\newcommand{\ee}{\mbox{\boldmath $e$}}
\newcommand{\AAA}{\mbox{\boldmath $A$}}
\newcommand{\ga}{\mbox{$\gamma$}}
\newcommand{\al}{\mbox{$\alpha$}}
\newcommand{\bt}{\mbox{$\beta$}}
\newcommand{\In}{\mbox{\rm in}}
\newcommand{\Out}{\mbox{\rm out}}
\newcommand{\la}{\leftarrow}
\newcommand{\Q}{\mbox{\rm\bf Q}}
\newcommand{\Qplus}{\Q^+}
\newcommand{\minimize}{\mbox{\rm minimize }}
\newcommand{\maximize}{\mbox{\rm maximize }}
\newcommand{\st}{\mbox{\rm subject to }}
\newcommand{\CN}{\mbox{${\cal N}$}}
\newcommand{\CS}{\mbox{${\cal S}$}}
\newcommand{\CM}{\mbox{${\cal M}$}}
\newcommand{\CG}{\mbox{${\cal G}$}}
\newcommand{\R}{\mbox{\rm\bf R}}
\newcommand{\Rplus}{\R_+}
\newcommand{\lefta}{\mbox{${\leftarrow}$}}
\def\fnum@figure{{\bf Figure \thefigure}}
\def\fnum@table{{\bf Table \thetable}}
\long\def\@mycaption#1[#2]#3{\addcontentsline{\csname
 ext@#1\endcsname}{#1}{\protect\numberline{\csname
  the#1\endcsname}{\ignorespaces #2}}\par
     \begingroup
       \@parboxrestore
          \small
       \@makecaption{\csname fnum@#1\endcsname}{\ignorespaces
#3\endgroup}
      }
\newcommand{\ADNB}{{\bf ADNB}}
\def\G2{{\bf DG2}}
\begin{document}

\title{2-Player Nash and Nonsymmetric Bargaining Games:\\
Algorithms and Structural Properties}

\author{
{\Large\em Vijay V. Vazirani}\thanks{College of Computing,
Georgia Institute of Technology, Atlanta, GA 30332--0280,
E-mail: {\sf vazirani@cc.gatech.edu}.
}
}

\date{}
\maketitle

\begin{abstract}
The solution to a Nash or a nonsymmetric bargaining game is obtained by maximizing a concave function over 
a convex set, i.e., it is the solution to a convex program. We show that each 2-player game whose convex
program has linear constraints, admits a rational solution and such a solution can be found in polynomial time
using only an LP solver. If in addition, the game is succinct, i.e., the coefficients in its convex program are
``small'', then its solution can be found in strongly polynomial time. We also give a non-succinct linear game whose solution
can be found in strongly polynomial time.

\end{abstract}

\section{Introduction}
\label{sec.intro}

In game theory, 2-player games occupy a special place -- not only because numerous applications
involve 2 players but also because they often have remarkable properties that are not possessed by 
extensions to more players.

For instance, in the case of Nash equilibrium, the 2-player case is the most extensively studied and used,
and captures a rich set of possibilities, e.g., those encapsulated in canonical games such as
prisoner's dilemma, battle of the sexes, chicken, and matching pennies. In terms of properties, 2-player Nash 
equilibrium games always have rational solutions whereas games with 3 or more players may have only irrational solutions;
an example of the latter, called ``a three-man poker game,'' was given by Nash \cite{nash.equilibrium}).
Finally, von Neumann's minimax theorem for 2-player zero-sum games yields a polynomial time algorithm using LP.
On the other hand, 3-player zero-sum games are PPAD-hard, since 2-player non-zero-sum games can be reduced to them; the
reduction is due to \cite{von} and PPAD-hardness is due to \cite{CDT}. 

John Nash's seminal paper defining the bargaining game dealt only with the case of 2-players \cite{nash.bargain}. 
Later, it was observed that his entire setup, and theorem characterizing the bargaining solution, easily
generalize to the case of more than 2 players, e.g., see \cite{Kalai.nonsymmetric}.

Recently, Vazirani \cite{va.NB} initiated a systematic algorithmic study of Nash bargaining games and also
carried this program over to solving nonsymmetric bargaining games of Kalai \cite{Kalai.nonsymmetric}.
In this paper we carry the program further, though only for the case of 2-player games.
Our findings indicate that this case exhibits a rich set of possibilities algorithmically and calls for a
further investigation.

The solution to a Nash or a nonsymmetric bargaining game is obtained by maximizing a concave function over 
a convex set, i.e., it is the solution to a convex program. Two basic classes of these games defined in \cite{va.NB}
are NB and LNB.
The convex program for a game in NB admits a polynomial time separation oracle and hence its
solution can be obtained to any desired accuracy using the ellipsoid algorithm.
All constraints in the convex program for a game in LNB are linear and \cite{va.NB} gives combinatorial
polynomial time algorithms for several games in this class; by a {\em combinatorial algorithm} we mean an
algorithm that performs an efficient search over a discrete space. 
Let NB2 and LNB2, respectively, be the restrictions of these classes to 2-players games.

We show that for solving any game in LNB2, it is not essential to solve a convex program --  an LP solver suffices.
As a consequence, all games in LNB2 have rational solutions; this property does not hold for 3-player games in LNB. 
We then define a subclass of LNB2 called SLNB2, consisting of {\em succinct} games, i.e., the coefficients
in its convex program are ``small''. We show that all games in SLNB2 admit strongly polynomial algorithms; however, these
algorithms are not combinatorial.
This class includes nontrivial and interesting games, e.g., the game \G2, which consists of a directed graph
with edge capacities and each player is a source-sink pair desiring flow (see Section \ref{sec.classes} for definition).
This game is derived from Kelly's {\em flow markets} \cite{Kelly}.

Next, we ask if there is a game in (LNB2 - SLNB2) that admits a strongly polynomial time algorithm.
The answer turns out to be ``yes''.  We show that the 2-player version of the game \ADNB, for which a 
combinatorial polynomial time algorithm is given in \cite{va.NB}, admits a combinatorial strongly polynomial algorithm.
This game is derived from the linear case of the Arrow-Debreu market model (see Section \ref{sec.ADNB2} for definition).

Finally, we ask if there is a game in (NB2 - LNB2) that can be solved in polynomial time without a convex
program solver. Once again, the answer turns out to be ``yes''.  We give a game whose solution reduces to solving
a degree 4 equation. Alternatively, it also admits an elegant geometric solution.

Our last 2 results raise interesting questions, e.g., is there a characterization of the subclass of LNB2 which consists of all
games that admit strongly polynomial algorithms? They also indicate that the class NB2, in particular (NB2 - LNB2), 
may be worth exploring further algorithmically and structurally, e.g., does (NB2 - LNB2) contain a game that always has a
rational solution? And are there 2-player games, not in NB2, whose solution can be computed in polynomial time?

Building on a remarkable convex program of Eisenberg and Gale \cite{eisenberg}, \cite{JV.EG} gave the notion of {\em Eisenberg-Gale markets};
see Section \ref{sec.model}. In answering an open question of \cite{JV.EG} affirmatively, \cite{CDV.EG} showed that EG(2) markets,
i.e., the restriction of Eisenberg-Gale markets to 2 buyers, always admit a rational solution and it can be found using only an LP solver.
Our first result is obtained by extending their algorithm. 
For the second result, we use the notion of a {\em flexible budget market} given in \cite{va.NB}; see Section \ref{sec.ADNB2}.
We reduce \ADNB2 to such a market and give a combinatorial algorithm for finding an equilibrium in it.

\section{Nash and Nonsymmetric Bargaining Games}
\label{sec.Nash}

An {\em $n$-person Nash bargaining game} consists of a pair $(\CN, \cc)$, where $\CN \subseteq \Rplus^n$
is a compact, convex set and $\cc \in \CN$. Set $\CN$ is the {\em feasible set} and its elements give
utilities that the $n$ players can simultaneously accrue. Point $\cc$ is the {\em disagreement point}
-- it gives the utilities that the $n$ players obtain if they decide not to cooperate.
The set of $n$ agent will be denoted by $B$ and the agents will be numbered $1, 2, \ldots n$.
Game $(\CN, \cc)$ is said to be {\em feasible} if there is a point
$\vv \in \CN$ such that $\forall i \in B, \ v_i > c_i$.

The solution to a feasible game is the point $\vv \in \CN$ that satisfies
the following four axioms:
\begin{enumerate}
\item
{\bf Pareto optimality:}  No point in $\CN$ can weakly dominate $\vv$.
\item
{\bf Invariance under affine transformations of utilities:} 
\item
{\bf Symmetry:} The numbering of the players should not affect the solution.
\item
{\bf Independence of irrelevant alternatives:} If $\vv$ is the solution for $(\CN, \cc)$, and
$\CS \subseteq \Rplus^n$ is a compact, convex set satisfying $\cc \in \CS$ and  $\vv \in \CS \subseteq \CN$,
then $\vv$ is also the solution for $(\CS, \cc)$.
\end{enumerate}

Via an elegant proof, Nash proved:

\begin{theorem}
{\bf Nash} \cite{nash.bargain}
\label{thm.nash}
If game $(\CN, \cc)$ is feasible
then there is a unique point in $\CN$ satisfying the axioms stated above.
This is also the unique point that maximizes $\Pi_{i \in B}  {(v_i - c_i)}$, over all $\vv \in \CN$.
\end{theorem}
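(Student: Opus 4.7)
The plan is to proceed in three stages: first establish that the product $\Pi_{i \in B}(v_i - c_i)$ has a unique maximizer $\vv^*$ on $\CN$; second verify that $\vv^*$ satisfies the four axioms; and third show that any point satisfying the axioms must coincide with $\vv^*$, using an affine normalization together with a supporting-hyperplane construction that lets us invoke the symmetry and IIA axioms.

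For existence and uniqueness of the maximizer, I would restrict attention to the set $\CN' = \CN \cap \{\vv : \vv \geq \cc\}$, which is compact, convex, and (by feasibility) contains a point with $v_i > c_i$ for every $i$. On the relative interior of this set the function $\sum_{i \in B} \log(v_i - c_i)$ is finite, continuous, and strictly concave, so it attains its supremum at a unique point $\vv^*$, which automatically has $v^*_i > c_i$ for all $i$. This is the unique maximizer of $\Pi_{i \in B}(v_i - c_i)$.

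To check the axioms at $\vv^*$: Pareto optimality follows because any weakly dominating point would strictly increase the product; invariance under affine rescaling of coordinates is immediate since such a transformation multiplies the product by a positive constant and preserves $\CN$'s convexity; symmetry follows because permuting the coordinates sends $\vv^*$ to the maximizer of the same (permutation-invariant) product over the permuted set, which by uniqueness forces $\vv^*$ to lie on the symmetry axis whenever $(\CN, \cc)$ is itself symmetric; and IIA follows because if $\CS \subseteq \CN$ still contains $\vv^*$, then $\vv^*$ remains the argmax on the smaller set.

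The heart of the proof, and the step I expect to be the main obstacle, is uniqueness: showing that any $\vv$ satisfying the four axioms equals $\vv^*$. The key idea is to normalize by axiom 2: choose an affine transformation $T_i(v_i) = (v_i - c_i)/(v_i^* - c_i)$ that sends $\cc$ to the origin and $\vv^*$ to $(1,1,\ldots,1)$. By invariance it suffices to prove the claim in the normalized coordinates. I would then establish the supporting-hyperplane fact that the transformed $\CN$ is contained in the half-space $H = \{\ww : \sum_i w_i \leq n\}$; if some $\ww \in \CN$ violated this, then for small $t>0$ the convex combination $(1-t)\vv^* + t\ww$ would lie in $\CN$ and, expanding $\Pi_i(1 + t(w_i - 1)) = 1 + t\sum_i(w_i-1) + O(t^2)$, would yield a strictly larger product than $\vv^*$, a contradiction. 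Now let $\CT$ be a large symmetric compact convex superset of the normalized $\CN$ contained in $H$ (for instance $\CT = \{\ww \in \Rplus^n : \sum_i w_i \leq n,\ w_i \leq M\}$ for $M$ large enough that $\CN \subseteq \CT$). The game $(\CT, \mathbf{0})$ is symmetric, so by the symmetry axiom its solution lies on the diagonal, and by Pareto optimality it must be $(1,\ldots,1) = \vv^*$. Since $\vv^* \in \CN \subseteq \CT$, the IIA axiom (applied with $\CT$ in place of $\CN$ and $\CN$ in place of $\CS$) yields that $\vv^*$ is also the solution of $(\CN, \mathbf{0})$. Any axiom-satisfying $\vv$ must therefore equal $\vv^*$ in the normalized coordinates, and undoing the affine transformation gives $\vv = \vv^*$ in the original coordinates, completing the proof.
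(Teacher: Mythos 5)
The paper does not actually prove Theorem~\ref{thm.nash}; it states it as a cited result of Nash (``Via an elegant proof, Nash proved:'' followed by the theorem attributed to \cite{nash.bargain}), so there is no in-paper argument to compare against. Your proposal is, in substance, Nash's original argument as it appears in standard treatments, and it is correct: restrict to $\CN' = \CN \cap \{\vv \geq \cc\}$ so that the product is unambiguously the one being maximized (this quietly resolves the even-$n$ issue where $\Pi(v_i - c_i)$ could also be positive with every $v_i < c_i$); use strict concavity of $\sum_i \log(v_i - c_i)$ on the interior of $\CN'$ together with compactness and feasibility to get a unique interior maximizer $\vv^*$; then the normalization via axiom 2, the first-order supporting-hyperplane inclusion $\CN \subseteq \{\ww : \sum_i w_i \leq n\}$, the enclosing symmetric polytope $\CT$ (which exists because $\CN$ is bounded), and the symmetry/Pareto/IIA chain are exactly the steps of the classical proof. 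Two small points you glossed over but that do go through: in the symmetric game $(\CT,\mathbf 0)$ the point $(1,\ldots,1)$ really is the \emph{unique} symmetric Pareto point, since any weak dominator would have coordinate sum exceeding $n$, and any symmetric point with smaller coordinates is dominated by $(1,\ldots,1)$; and the IIA step applies because after normalization $\mathbf 0 = T(\cc) \in T(\CN) \subseteq \CT$ and $\vv^* \in T(\CN)$, so the hypotheses of axiom 4 hold. The proposal is a faithful and essentially complete reconstruction of the theorem's proof.
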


Thus Nash's solution involves maximizing a concave function over
a convex domain, and is therefore the optimal solution to the convex program that maximizes
$\sum_{i \in B}  \log (v_i - c_i)$ subject to $\vv \in \CN$.
As a consequence, if for a specific game, a separation oracle can be implemented
in polynomial time, then using the ellipsoid algorithm one can get as good an approximation to
the solution as desired \cite{GLS}. 

Kalai \cite{Kalai.nonsymmetric} generalized Nash's bargaining game by removing the axiom
of symmetry and showed that any solution to the resulting
game is the unique point that maximizes $\Pi_{i \in B}  {(v_i - c_i)^{p_i}}$, over all $\vv \in \CN$,
for some choice of positive numbers $p_i$, for $i \in B$, such that $\sum_{i \in B} {p_i} \ = \ 1$.
Thus, any particular nonsymmetric bargaining solution is specified by giving the
$p_i$'s satisfying the 2 conditions given above.
For the purposes of computability, we will restrict to rational $p_i$'s.
Equivalently, let us define the {\em n-person nonsymmetric bargaining game} as follows. Assume that
$B, \CN, \cc$ are as defined above. In addition, we are given the {\em clout\footnote{The
choice of the term ``clout of a player'' is justified by a theorem of Kalai
stating that the solution to this game corresponds precisely to the solution of
a $k$-person game, with $k = \sum_{i \in B} {w_i}$, which is obtained by taking $w_i$ copies
of player $i$, for $1 \leq i \leq n$.} } of
each player: a positive integer $w_i$ for each player $i$. Assuming the game is feasible, the
solution to this nonsymmetric bargaining game is
the unique point that maximizes $\Pi_{i \in B}  {(v_i - c_i)^{w_i}}$, over all $\vv \in \CN$.

One more remark is in order. As shown by Kalai \cite{Kalai.nonsymmetric}, any nonsymmetric game can be reduced to a Nash bargaining game 
over a larger number of players. However, this reduction is not useful for our purpose because once the number of players
increases, the special properties of 2-player games are lost.

\section{The Classes NB2, LNB2 and SLNB2}
\label{sec.classes}

Before defining the classes NB2 and LNB2, we recall the definition of the classes NB and LNB from \cite{va.NB}.
Let $\CG$ be an $n$-person Nash or nonsymmetric bargaining game whose solution is given by the optimal solution to the
following convex program, where $\xx$ are $m$ auxiliary variables, the functions $f_i$ are convex.
(Clearly, $\CG$ is a Nash bargaining game if each $w_i = 1$.)

\begin{lp}
\label{CP-NB}
\maximize & \sum_{i \in B} w_i  \log (v_i - c_i)  \\[\lpskip]
\st       & \mbox{for} \ \ i = 1 \ldots k: \ \ \ \   f_i(\vv, \xx) \leq   0   \nonumber  \\
          & \ \ \ \ \ \ \ \ \ \ \ \ \ \ \ \ \ \   \vv \geq 0   \nonumber \\
          & \ \ \ \ \ \ \ \ \ \ \ \ \ \ \ \ \ \   \xx \geq 0   \nonumber 
\end{lp}

The game $\CG$ is said to be in the class NB if each of the $k$ constraints of program (\ref{CP-NB})
can be checked in polynomial time at any given point $(\vv, \xx)$. This gives a separation oracle for the
program and therefore, using the ellipsoid algorithm, the Nash or nonsymmetric bargaining solution to the
game $\CG$ can be obtained to any desired accuracy, assuming the game is feasible. Furthermore, $\CG$ is
feasible iff the optimal solution to the following convex program is $> 0$, which can also be checked
in polynomial time.

\begin{lp}
\label{CP-NB-feasible}
\maximize &   \ \ \ \ \ \ \ \ \ \ \ \ \ \ \   t         \\[\lpskip]
\st       & \mbox{for} \ \ i = 1 \ldots n: \ \ \ \   v_i  \geq  c_i + t   \nonumber  \\
          & \mbox{for} \ \ i = 1 \ldots k: \ \ \ \   f_i(\vv, \xx) \leq   0   \nonumber  \\
          & \ \ \ \ \ \ \ \ \ \ \ \ \ \ \ \ \ \   \vv \geq 0   \nonumber \\
          & \ \ \ \ \ \ \ \ \ \ \ \ \ \ \ \ \ \   \xx \geq 0   \nonumber   
\end{lp}

The restriction of class NB to 2-player games yields the class NB2.

If all constraints in (\ref{CP-NB}) are linear, then game $\CG$ is said to be {\em linear}.  
If so, the constraints form a polyhedron in $\R^{n+m}$. Its projection on the first $n$ coordinates, 
corresponding to $\vv$, is a polytope, which is also the feasible set $\CN$.
The class of these games is called {\em linear Nash and nonsymmetric bargaining games}, and abbreviated to LNB.

Finally, the restriction of LNB to 2-player games gives us the class LNB2.
We will assume w.l.o.g. that the convex program for game $\CG$ in LNB2 has the following form:

\begin{lp}
\label{CP-LNB2}
\maximize & \sum_{i = 1,2} w_i  \log (v_i - c_i)  \\[\lpskip]
\st       & \ \ \ \   \AAA \xx +  \bb_1 v_1  +  \bb_2 v_2   \leq   \ee   \nonumber  \\
          & \ \ \mbox{for} \ \ i = 1, 2: \ \ \ \ \  v_i  \geq  0   \nonumber  \\
          & \ \ \ \ \ \ \ \ \ \ \ \ \ \ \ \ \ \ \ \ \ \ \ \ \ \xx \geq 0   \nonumber 
\end{lp}

where $\AAA$ is an $m \times n$ matrix, $\xx$ is a vector consisting of allocation and auxiliary variables,
say $n$ in number, and $\bb_1, \bb_2, \ee$ are $m$-dimensional vectors. 

We will say that $\CG$ is {\em succinct} if all the entries in $\AAA, \bb_1, \bb_2$ are
polynomially bounded in $m$ and $n$. The subclass of LNB2 consisting of all succinct games will 
be called SLNB2.

\section{Some Representative 2-Player Games}
\label{sec.rep}

In this section, we provide representative games for the 3 classes defined above. We will study all 3 games in detail in this paper. 

\subsection{The game DG2}
\label{sec.DG2}

The game \G2\ lies in SLNB2.
We are given a directed graph $G = (V, E)$, with $c_e \in \Qplus$ specifying the capacity of edge $e \in E$.
Two source-sink pairs are also specified, $(s_1, t_1) \ \mbox{and} \ (s_2, t_2)$. Each source-sink pair 
represents a player in the game and has its own disagreement utility (flow value) $c_i$, for $i = 1, 2$. 
In the nonsymmetric version, we are also given the clouts $w_1$ and $w_2$ of the two players.
The object is to find the Nash or nonsymmetric bargaining solution.
Let $\CG$ denote the given instance of \G2\footnote{Note that there will be no confusion in using ``$c$'' to denote 
capacities of edges as well as disagreement utilities of players since in the former case, the subscript will always be $e$ 
and in the latter case, it will be $1, 2$ or $i$.}.

Next, we give a convex program that captures the solution to $\CG$.
The flow going from $s_i$ to $t_i$ will be referred to as commodity $i$, for $i = 1, 2$, and $f_i$ will denote the total flow of commodity $i$.
For each edge $e \in E$, we have 2 variables, $f_e^1$ and $f_e^2$ which denote the amount of each commodity flowing through $e$.
The constraints ensure that the total flow going through an edge does not exceed its capacity and that for each commodity,
at each vertex, other than the source-sink pair of this commodity, flow conservation holds. 
For vertex $v \in V$, $\Out(v) = \{ (v, u) \ | \ (v, u) \in E \}$ and $\In(v) = \{ (u, v) \ | \ (u, v) \in E \}$.
The constraints of this program are simply ensuring that $(f_1, f_2)$ lies in the feasible set $\CN$.

\begin{lp}
\label{CP-2D}
\maximize &   \sum_{i = 1,2} {w_i  \log (f_i - c_i)}            \\[\lpskip]
\st       & \mbox{for} \ \ i = 1, 2: \ \ \ \   f_i = \sum_{e \in \Out(s_1)} {f_{e}^i}  \nonumber  \\
          & \forall e\in E:~   \ \   f_e^1 + f_e^2  \leq c_e  \nonumber \\
          & \mbox{for} \ \ i = 1, 2: \ \ \forall v\in V - \{s_i, t_i\}:~ \ \   \sum_{e \in \In(v)} f_e^i =  \sum_{e \in \Out(v)} f_e^i   \nonumber \\ 
          & \mbox{for} \ \ i = 1, 2: \ \ \forall e\in E:~   \ \   f_e^i  \geq 0  \nonumber 
\end{lp}

\subsection{The game ADNB2}
\label{sec.ADNB2-def}

The game \ADNB2 lies in (LNB2 - SLNB2). To define it, we first need to define
the game \ADNB, introduced in \cite{va.NB}. This game was derived from the linear case of the Arrow-Debreu model, which
differs from Fisher's linear case in that each agent comes to the market not with money but with an initial
endowment of goods. We first state it formally.

Let $B = \{1, 2, \ldots, n\}$ be a set of agents and $G = \{1, 2, \ldots, g\}$ be a set of divisible goods.
We will assume w.l.o.g. that there is a unit amount of each good. 
Let $u_{ij}$ be the utility derived by agent $i$ on receiving one unit of good $j$; w.l.o.g., we
will assume that $u_{ij}$ is integral.
If $x_{ij}$ is the amount of good $j$ that agent $i$ gets, for $1 \leq j \leq g$, then the total
utility derived by her is
\[ v_i(x) = \sum_{j \in G}  {u_{ij} x_{ij} } .\]
Finally, we assume that each agent has an initial endowment of these goods; the total amount of each
good possessed by the agents is 1 unit. 
The question is to find prices for these goods so that if each agent sells her entire initial
endowment at these prices and uses the money to buy an optimal bundle of goods, the market clears.

W.l.o.g. we may assume that each good is desired by at least one agent and each agent
desires at least one good, i.e.,
\[ \forall j \in G, \ \exists i \in B: \ u_{ij} >0 \ \ \mbox{and}  \ \
\forall i \in B, \ \exists j \in G: \ u_{ij} >0 .\]
If not, we can remove the good or the agent from consideration.

In \cite{va.NB}, we explored a different solution concept for this setting: for each agent $i$, compute the utility she accrues
from her initial endowment, say $c_i$. Let $\CN$ in $\Rplus^n$ denote the set of all possible utility vectors obtained by
distributing the goods among the agents in all possible ways. Now seek the Nash bargaining
solution for instance $(\CN, \cc)$. The setup was made more general by assuming that $c_i$'s are arbitrary numbers given with the
problem instance, i.e., they do not come from initial endowments.

Let \ADNB2 denote the restriction of this game to 2 players. We will assume these are nonsymmetric games, i.e., we are also given
the clout, $w_1$ and $w_2$ of the two players. We give a combinatorial strongly polynomial algorithm for this game; the algorithm 
in \cite{va.NB} is not strongly polynomial.

The bargaining solution to \ADNB2 is the optimal solution to the following convex program:

\begin{lp}
\label{CP-ADNB2}
\maximize & \sum_{i = 1,2} {w_i \log (v_i - c_i)}  \\[\lpskip]
\st       & \forall i = 1,2: \ \  v_i = \sum_{j \in G} {u_{ij} x_{ij} }  \nonumber  \\
          & \forall j\in G: \ \ \sum_{i = 1,2} x_{ij} \leq 1  \nonumber \\
          & \forall i = 1,2, \ \forall j\in G: \ \ x_{ij} \geq 0 \nonumber
\end{lp}

\subsection{The circle game}
\label{sec.circle}

The circle game lies in (NB2 - LNB2). Its feasible set is the intersection of the unit disk with the positive orthant.
We will consider only its Nash bargaining version. Its convex program is: 

\begin{lp}
\label{CP-circle}
\maximize & \sum_{i = 1,2} {\log (v_i - c_i)}  \\[\lpskip]
\st       & \ \ \ \ \ \           v_1^2 + v_2^2 \leq 1  \nonumber  \\
          & \forall i = 1,2: \ \   v_i \geq 0 \nonumber
\end{lp}

\section{Fisher and Eisenberg-Gale Market Models}
\label{sec.model}

We will first state Fisher's market model for the case of linear utility functions \cite{scarf}. 
Consider a market consisting of a set of $n$ buyers 
$B = \{1, 2, \ldots, n\}$, and a set of $g$ divisible goods, $G = \{1, 2, \ldots, g\}$; 
we may assume w.l.o.g. that there is a unit amount of each good. Let $m_i$ be the money 
possessed by buyer $i$, $i \in B$; w.l.o.g. assume that each $m_i >0$.
Let $u_{ij}$ be the utility derived by buyer $i$ on receiving one unit of good $j$. 
Thus, if $x_{ij}$ is the amount of good $j$ that buyer $i$ gets, for $1 \leq j \leq g$, then the total 
utility derived by $i$ is
\[ v_i(x) = \sum_{j=1}^g  {u_{ij} x_{ij} } .\]

The problem is to find prices $\pp = \{p_1, p_2, \ldots , p_g \}$ for the goods so that when each
buyer is given her utility maximizing bundle of goods, the market clears, i.e., each
good having a positive price is exactly sold, without there being any deficiency or surplus. 
Such prices are called {\em market clearing prices} or {\em equilibrium prices}.

The following is the Eisenberg-Gale convex program. Using KKT conditions, one can show that its
optimal solution is an equilibrium allocation for Fisher's linear market and the Lagrange variables
corresponding to the inequalities give equilibrium prices of goods 
(e.g., see Theorem 5.1 in \cite{va.chapter}).

\begin{lp}
\label{CP-EG}
\maximize & \sum_{i \in B} {m_i \log v_i }  \\[\lpskip]
\st       & \forall i\in B:~  v_i = \sum_{j \in G} {u_{ij} x_{ij} }  \nonumber  \\
          & \forall j\in G:~ \sum_{i\in B} x_{ij} \leq 1  \nonumber \\
          & \forall i \in B, \ \forall j\in G:~ x_{ij} \geq 0 \nonumber
\end{lp}


Next, we state the definition of Eisenberg-Gale markets as given in \cite{JV.EG}. 
Let us say that a convex program is an {\em Eisenberg-Gale-type} convex program if its objective function is of the form
\[ \max \sum_{i \in B} {m_i \log v_i } , \]
subject to linear packing constraints, i.e., constraints of the form $\leq$ in which all coefficients and the r.h.s. are
non-negative. Let $\CM$ be a Fisher market, with an arbitrary utility function, whose set of feasible allocations and buyers' utilities 
are captured by a polytope $\Pi$. We will assume that the linear constraints defining $\Pi$ are packing constraints. As a result,
$\CM$ satisfies the {\em free disposal property}, i.e., if $\vv$ is a feasible utility vector then so is any vector dominated
by $\vv$. We will say that an allocation $x_1, \ldots, x_n$ made to the buyers is a {\em clearing allocation} if it uses up
all goods exactly to the extent they are available in $\CM$. Finally, we will say that $\CM$ is an {\em Eisenberg-Gale market}
if any clearing allocation $x_1, \ldots, x_n$ that maximizes
\[ \max \sum_{i \in B} {m_i \log v_i(x_i) }  \]
is an equilibrium allocation, i.e., there are prices $p_1, \ldots p_g$ for the goods such that for each buyer $i$, $x_i$
is a utility maximizing bundle for $i$ at these prices. 
The class EG(2), defined in \cite{CDV.EG}, is essentially the restriction of Eisenberg-Gale markets to the case of 2 buyers; see
\cite{CDV.EG} for the precise definition.

\section{The Feasible Polytope and Some Basic Procedures}
\label{sec.faces}

Let $\CG$ be a game in LNB2 whose solution is captured by convex program (\ref{CP-LNB2}).
We can test if $\CG$ is feasible by solving the following LP:

\begin{lp}
\label{lp.feasible}
\maximize & \ \ \ \ \ \ \ \  \ \ \ \ \ \ \ \ \      t  \\[\lpskip]
\st       & \ \ \ \ \ \ \ \  \ \ \ \ \ \ \ \ \      v_1 \geq  c_1 + t         \nonumber \\
          & \ \ \ \ \ \ \ \  \ \ \ \ \ \ \ \ \      v_2 \geq  c_2 + t            \nonumber \\
\st       & \ \ \ \   \AAA \xx +  \bb_1 v_1  +  \bb_2 v_2   \leq   \ee   \nonumber  \\
          & \ \ \ \ \ \ \ \ \ \ \ \ \ \ \ \ \ \ \ \ \ \ \ \ \ \xx \geq 0   \nonumber 
\end{lp}          
          
Now, $\CG$ is feasible iff the optimal value of $t > 0$. Henceforth, assume that $\CG$ is feasible.

Next, we make the following change of variables,
\[ \mbox{for} \ i = 1, 2: \ \ y_i = v_i - c_i , \] 
hence obtaining the following program which is equivalent to (\ref{CP-LNB2}).

\begin{lp}
\label{CP-y}
\maximize & \sum_{i = 1,2} w_i  \log {y_i}  \\[\lpskip]
\st       & \ \ \ \   \AAA \xx +  \bb_1  (y_1 + c_1)   +  \bb_2 (y_2 + c_2)   \leq   \ee   \nonumber  \\
          & \ \ \mbox{for} \ \ i = 1, 2: \ \ \ \ \  y_i  \geq  0   \nonumber  \\
          & \ \ \ \ \ \ \ \ \ \ \ \ \ \ \ \ \ \ \ \ \ \ \ \ \ \xx \geq 0   \nonumber 
\end{lp}

Henceforth, we will denote $(\ee - c_1 \bb_1 - c_2 \bb_2)$ by $\dd$.
We will denote by $\Pi$ the polyhedron in $\R^{n + 2}$ which is defined by the
constraints of program (\ref{CP-y}). In this paper, we will write the constraints of (\ref{CP-y}) concisely
as follows. This notation will also be used for LP's optimizing over the polytope $\Pi$.

\begin{lp}
\label{CP-concise}
\maximize & \sum_{i = 1,2} w_i  \log {y_i}  \\[\lpskip]
\st       & \ \ \ \ \ \ \ \ \ \    (\xx, y_1, y_2) \in \Pi   \nonumber
\end{lp}

The projection of $\Pi$ onto the coordinates $y_1, y_2$ gives a polytope,
$\CN$ in $\R^2$, which we will call the {\em feasible polytope}; since $\CG$ is feasible, this polytope is full dimensional. 
In this section, we will describe its {\em useful facets}, i.e., 
facets on which the solution to game $\CG$ can lie, and we will give some basic procedures for operating on these facets.

We first compute the point $(l_1, l_2)$ by first maximizing $y_1$ over $\Pi$ to get $l_1$ and then maximizing $y_2$ over 
$\Pi$, subject to $y_1 = l_1$, to get $l_2$. 
Similarly, compute the point $(h_1, h_2)$ by first maximizing $y_2$ over $\Pi$ to get $h_2$ and then maximizing $y_1$ over 
$\Pi$, subject to $y_2 = h_2$, to get $h_1$. Clearly, both these points are vertices of $\CN$. Finally, the set of facets encountered in
moving, on the boundary of $\CN$, from $(l_1, l_2)$ to $(h_1, h_2)$, by increasing the second coordinate are the useful facets.

Each of the useful facets has the form
\[ y_1 + \al y_2  \leq  \bt  , \]
where $\al > 0$ and $\bt > 0$. We will denote the vertex at the intersection of the two facets
\[ y_1 + \al_1 y_2  \leq  \bt_1  \ \ \  \mbox{and} \ \ \  y_1 + \al_2 y_2  \leq  \bt_2  , \]
by $(\al_1, \al_2)$; we will assume $\al_1 < \al_2$.

Let $\al^1$ and $\al^2$ be the $\al$ values of the first and last facets encountered in moving
from $(l_1, l_2)$ to $(h_1, h_2)$; clearly, $\al^1 < \al^2$. Our binary search will be performed on the interval $[\al^1, \al^2]$.
In Procedure 3 below, we show how to compute $\al^1$ and $\al^2$.

The solution to $\CG$ must lie on a face which is either a useful facet or
a vertex at the intersection of 2 useful facets. These 2 possibilities give rise to 
distinct procedures and proofs throughout.

\subsection{Procedure 1: Given $\al$, find the face it lies on}
\label{sec.P1}

We give an algorithm for the following task: 
Given a number $\al$ s.t. $\al^1 \leq \al \leq \al^2$, determine which of the following possibilities holds:
\begin{enumerate}
\item
$\al$ defines a facet of $\CN$, $y_1 + \al y_2 \leq \bt$, for a suitable value of $\bt$.
If so, find this facet.
\item
There is a vertex of $\CN$, $(\al_1, \al_2)$, such that $\al_1 < \al < \al_2$.
If so, find this vertex.
\end{enumerate}

First solve the following LP and let its optimal objective function value be denoted by $\bt$ and
let $a$ and $b$ denote the optimal values of $y_1$ and $y_2$, respectively.

\begin{lp}
\label{alpha}
\maximize & \ \ \ \ \ \ \ \  \ \  y_1 + \al y_2     \\[\lpskip]
\st       & \ \ \ \ \ \ \ \ \ \    (\xx, y_1, y_2) \in \Pi   \nonumber
\end{lp}

Having computed $\bt$, solve the following LP and let its objective function value be denoted 
by $a_1$.

\begin{lp}
\label{min}
\minimize & \ \ \ \ \ \ \ \  \ \ \ \   y_1      \\[\lpskip]
\st       &  \ \ \ \ \ \ \ \ \    y_1 + \al y_2 = \bt                \nonumber \\
          & \ \ \ \ \ \ \ \ \ \    (\xx, y_1, y_2) \in \Pi   \nonumber
\end{lp}

Next, change the objective in LP (\ref{min}) to maximize $y_1$, and let its optimal objective function value be $a_2$.
If $a_1 < a_2$, we are in the first case. Define $b_1 = (\bt - a_1)/\al$ and $b_2 = (\bt - a_2)/\al$.
Then, the endpoints of the facet $y_1 + \al y_2 = \bt$ are $(a_1, b_1)$ and $(a_2, b_2)$.
Otherwise, $a_1 = a_2 = a$, say, and we are in the second case. Let $b$ be the value of $y_2$ computed in LP (\ref{min}).
Then, the vertex has coordinates $(a, b)$. 

Next, we need to find $\al_1$ and $\al_2$ for this vertex. Let us begin by writing the dual for LP (\ref{alpha}).

\begin{lp}
\label{dual}
\minimize & \ \ \ \ \ \ \ \     \sum_j {d_j p_j}      \\[\lpskip]
\st       &  \ \ \ \     \ \ \  \sum_j {b_{1j} p_j}   \geq 1   \nonumber  \\
          &  \ \ \ \     \ \ \  \sum_j {b_{2j} p_j}   \geq \al    \nonumber \\
          & \mbox{for} \ \ 1 \leq i \leq n:  \  \  \sum_j {A_{ji} p_j}  \geq  0  \nonumber  \\
          & \mbox{for} \ \ 1 \leq j \leq m:  \  \   p_j  \geq  0  \nonumber  
\end{lp}

Let $(\xx^*, y_1^*, y_2^*)$ be an optimal solution to LP (\ref{alpha}). Since $\CG$ has been assumed to be feasible, 
$y_1^* > 0$ and $y_2^* >0$. 
The next LP is derived from LP (\ref{dual}) by adding constraints on $p_j$ which are implied by the
complementary slackness conditions of the primal and dual pair of LP's (\ref{alpha}) and (\ref{dual}).
It is not optimizing any function, since we are only concerned with its feasible solutions.

\begin{lp}
\label{LP-alpha}
          &  \ \ \ \     \ \ \  \sum_j {b_{1j} p_j}   =   1   \\[\lpskip]
          &  \ \ \ \     \ \ \  \sum_j {b_{2j} p_j}   =   r    \nonumber \\
          & \mbox{for} \ \ 1 \leq i \leq n:  \  \  \sum_j {A_{ji} p_j}  \geq  0  \nonumber  \\
          & \mbox{for} \ \ 1 \leq i \leq n \ s.t. \ x_i^* > 0 :  \  \  \sum_j {A_{ji} p_j}  =  0  \nonumber  \\
          & \mbox{for} \ \ 1 \leq j \leq m \ s.t. \ \sum_i {A_{ji} x_i^*} + b_{ij} y_1^* + b_{2j} y_2^* < d_j :  \ \ \  \   p_j  =  0  \nonumber  \\
          & \mbox{for} \ \ 1 \leq j \leq m:  \  \   p_j  \geq  0  \nonumber  
\end{lp}

The next lemma follows from the 
complementary slackness conditions of the primal and dual pair of LP's (\ref{alpha}) and (\ref{dual}).

\begin{lemma}
\label{lem.alpha-range}
$ \{ \al \ | \ \mbox{LP} \ (\ref{alpha}) \  \mbox{attains its optimal solution at} \  (a, b) \} $ \\
$ = \  \{ r \ | \ \exists \ \mbox{a feasible solution to LP} \ (\ref{LP-alpha}) \  \mbox{in which} \ \sum_j {b_{2j} p_j}   =   r   \} .$
\end{lemma}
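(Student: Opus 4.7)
The plan is to prove both inclusions using LP duality and complementary slackness between the primal LP~(\ref{alpha}) and its dual LP~(\ref{dual}), with the fixed primal solution $(\xx^*, y_1^*, y_2^*) = (\xx^*, a, b)$ serving as the bridge. The key observation is that LP~(\ref{LP-alpha}) is nothing more than the set of dual feasible points that satisfy all complementary slackness conditions with this particular primal solution, together with the free choice of the value $r = \sum_j b_{2j} p_j$.

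For the forward inclusion, suppose $\alpha$ is such that LP~(\ref{alpha}) attains its optimum at $(\xx^*, a, b)$. By strong LP duality, there exists an optimal dual solution $p$ to LP~(\ref{dual}). I will check that $p$ satisfies every constraint of LP~(\ref{LP-alpha}) with $r = \alpha$. Since $\CG$ is assumed feasible we have $y_1^* = a > 0$ and $y_2^* = b > 0$, so complementary slackness forces the first two dual inequalities to be tight, yielding $\sum_j b_{1j} p_j = 1$ and $\sum_j b_{2j} p_j = \alpha = r$. Complementary slackness on the primal variables $x_i^* > 0$ yields $\sum_j A_{ji} p_j = 0$, and complementary slackness on slack primal constraints yields $p_j = 0$ whenever $\sum_i A_{ji} x_i^* + b_{1j} y_1^* + b_{2j} y_2^* < d_j$. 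The remaining dual inequalities $\sum_j A_{ji} p_j \ge 0$ and $p_j \ge 0$ hold by dual feasibility. Hence $p$ is feasible for LP~(\ref{LP-alpha}), witnessing $r = \alpha$ on the right-hand side.

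For the reverse inclusion, let $p$ be feasible for LP~(\ref{LP-alpha}) with $\sum_j b_{2j} p_j = r$, and set $\alpha := r$. The constraints of LP~(\ref{LP-alpha}) immediately imply that $p$ is feasible for the dual LP~(\ref{dual}) at this value of $\alpha$. Moreover, the additional equalities built into LP~(\ref{LP-alpha}) are exactly the complementary slackness conditions pairing $p$ with the fixed primal feasible point $(\xx^*, a, b)$. Thus $(\xx^*, a, b)$ and $p$ satisfy all CS conditions and are both feasible for their respective LPs, so by the sufficiency of complementary slackness both are optimal. In particular, LP~(\ref{alpha}) with parameter $\alpha = r$ attains its optimum at $(a, b)$, placing $r$ in the left-hand set.

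The step I expect to require the most care is the bookkeeping of complementary slackness: matching each primal variable and constraint to the correct dual constraint or variable, and verifying that the equalities $\sum_j b_{1j} p_j = 1$ and $\sum_j b_{2j} p_j = r$ really are CS conditions (this uses feasibility of $\CG$ to guarantee $y_1^*, y_2^* > 0$). Once that correspondence is set up correctly, both directions collapse to invoking sufficiency and necessity of complementary slackness, and no further computation is needed.
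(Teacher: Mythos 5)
Your proof is correct and follows essentially the same route as the paper's: both directions reduce to necessity and sufficiency of complementary slackness between LP~(\ref{alpha}) and LP~(\ref{dual}), with LP~(\ref{LP-alpha}) encoding exactly those CS conditions relative to the fixed primal point $(\xx^*, a, b)$. The only difference is presentational --- the paper writes out only the forward inclusion and declares the reverse ``similar,'' while you spell out both; the underlying argument is identical.
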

 
\begin{proof}
Let $A$ and $B$ denote the sets on the l.h.s. and r.h.s. of the equality, respectively.
Let $f \in A$. Clearly, $(\xx^*, a, b)$ is an optimal solution to LP (\ref{alpha}) with $\al$ substituted by $f$.
Let $\pp^*$ be an optimal solution to LP (\ref{dual}). Since $\pp^*$ satisfies all complementary slackness conditions of 
LP's (\ref{alpha}) and (\ref{dual}), it is a feasible solution to LP (\ref{LP-alpha}) with $r$ replaced by $f$. Hence $f \in B$
and $A \subseteq B$.

The reverse inclusion follows in a similar manner, again using complementary slackness conditions of 
LP's (\ref{alpha}) and (\ref{dual}).
\end{proof}
 
By Lemma \ref{lem.alpha-range}, we can obtain $\al_1$ and $\al_2$ as follows. First, minimize $r$ subject to the
constraints of LP (\ref{LP-alpha}); this gives $\al_1$. Next, maximize $r$ subject to the
constraints of LP (\ref{LP-alpha}); this gives $\al_2$.

\subsection{Procedure 2: Given $(a, b)$, find the face it lies on}
\label{sec.P2}

Given a point $(a, b)$ on the boundary of $\CN$, we give a procedure for finding the facet or
vertex it lies on. First, solve LP (\ref{LP-flows}) for finding a feasible allocation gives the 2 buyers utilities of
$y_1 = a$ and $y_2 = b$.

\begin{lp}
\label{LP-flows}
          & \ \ \ \ \ \ \ \  \ \ \ \   y_1  = a     \\[\lpskip]
          & \ \ \ \ \ \ \ \  \ \ \ \   y_2  = b     \nonumber  \\
          & \ \ \ \ \ \ \ \ \ \    (\xx, y_1, y_2) \in \Pi   \nonumber
\end{lp}

Next, solve the minimization and maximization
versions, with objective function $r$, of LP (\ref{LP-alpha}) to find $\al_1$ and $\al_2$, respectively. 
If $\al_1 = \al_2 = \al$, $(a, b)$ lies on the
facet $y_1 + \al y_2 \leq a + \al b$. Otherwise, $\al_1 < \al_2$ and $(a, b)$ lies on the vertex $(\al_1, \al_2)$.

\subsection{Procedure 3: Computing $\al^1$ and $\al^2$}
\label{sec.P3}

We now show how to compute $\al^1$ and $\al^2$, defined at the beginning of this section. As stated there,
our binary search will be performed on the interval $[\al^1, \al^2]$.

First, use Procedure 2 to find the vertex, say $(\al_1, \al_2)$, on which $(l_1, l_2)$ lies. Set, $\al^1 \la \al_1$.
Next, use Procedure 2 to find the vertex, say $(\al_1, \al_2)$, on which $(h_1, h_2)$ lies. Set, $\al^2 \la  \al_2$.

\section{Binary Search on Parameter $z$}
\label{sec.bs}

We first give some crucial definitions. Let $(f_1, f_2)$ be the solution to game $\CG$. 
For player $i$ define 
\[ \ga_i = {f_i \over m_i} . \]
Define parameter $z$ to be 
\[ z = {\ga_1 \over \ga_2} .\]
The next lemma relates $z$ to the point where the solution lies.

\begin{lemma}
\label{lem.z}
If the solution to game $\CG$ lies on:
\begin{enumerate}
\item
the facet $y_1 + \al y_2  \leq  \bt$, then $z = \al$.
\item
the vertex $(\al_1, \al_2)$, then $\al_1 < z < \al_2$.
\end{enumerate}
\end{lemma}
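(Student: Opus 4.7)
The plan is to prove Lemma \ref{lem.z} via first-order optimality (KKT) conditions for the concave program maximizing $w_1 \log y_1 + w_2 \log y_2$ over the feasible polytope $\CN$ (reading $f_i = y_i$ and $m_i = w_i$ at the optimum, so that $z = \gamma_1/\gamma_2 = (w_2 y_1)/(w_1 y_2)$). The gradient of the objective at the solution is
\[
\nabla = \left( \frac{w_1}{y_1},\, \frac{w_2}{y_2} \right),
\]
and by concavity the solution is characterized by the condition that $\nabla$ lies in the outward normal cone of $\CN$ at that point. I will use the useful-facet description from Section~\ref{sec.faces}: the outward normal to a useful facet $y_1 + \alpha y_2 \leq \beta$ is the vector $(1, \alpha)$ with $\alpha > 0$.

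First I would handle Case 1. If the solution lies in the relative interior of the facet $y_1 + \alpha y_2 \leq \beta$, the normal cone at that point is the ray $\{\lambda (1, \alpha): \lambda \geq 0\}$, and feasibility combined with $y_1, y_2 > 0$ (from feasibility of $\CG$) forces $\lambda > 0$. The two coordinate equations $w_1/y_1 = \lambda$ and $w_2/y_2 = \lambda \alpha$ then give, upon dividing,
\[
\alpha \;=\; \frac{w_2/y_2}{w_1/y_1} \;=\; \frac{w_2 y_1}{w_1 y_2} \;=\; \frac{\gamma_1}{\gamma_2} \;=\; z.
\]

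Next I would handle Case 2. If the solution is the vertex $(\alpha_1, \alpha_2)$, the normal cone is the two-dimensional cone spanned by $(1, \alpha_1)$ and $(1, \alpha_2)$, so there exist $\mu_1, \mu_2 \geq 0$ with
\[
\left( \frac{w_1}{y_1},\, \frac{w_2}{y_2} \right) \;=\; \mu_1 (1, \alpha_1) + \mu_2 (1, \alpha_2).
\]
The key subtlety is that I need strict positivity of both $\mu_i$. I would argue this by the convention implicit in the lemma statement: if $\mu_1 = 0$ (resp.\ $\mu_2 = 0$) then $\nabla$ is parallel to $(1, \alpha_2)$ (resp.\ $(1, \alpha_1)$), which by the previous case says the solution also maximizes the objective on the adjacent facet, and hence we would describe it as lying on that facet rather than on the vertex. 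Under the complementary convention that the solution ``lies on the vertex'' only when it is strictly interior to the normal cone, both $\mu_1, \mu_2 > 0$. Dividing the two coordinate equations then yields
\[
z \;=\; \frac{w_2/y_2}{w_1/y_1} \;=\; \frac{\mu_1 \alpha_1 + \mu_2 \alpha_2}{\mu_1 + \mu_2},
\]
a strict convex combination of $\alpha_1$ and $\alpha_2$; since $\alpha_1 < \alpha_2$ by the setup in Section~\ref{sec.faces}, this forces $\alpha_1 < z < \alpha_2$.

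The main obstacle is Case 2's strict inequality: the normal-cone characterization only gives nonnegative multipliers, so one has to justify the strict positivity carefully. I would resolve this either by the convention above or by a brief geometric argument: strict concavity of the objective implies that if the gradient is parallel to a single facet normal, the unique unconstrained maximum on that facet's supporting line is attained, and if that point coincides with the vertex then the solution is equally well described by Case 1. The rest of the argument is a routine application of stationarity and the parametrization of facet normals already set up in Section~\ref{sec.faces}.
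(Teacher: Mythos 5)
Your proof is correct and follows essentially the same route as the paper: both rest on the first-order/tangency condition that the gradient $(w_1/y_1,\,w_2/y_2)$ must lie in the normal cone to $\CN$ at the optimum (a scalar multiple of $(1,\al)$ on a facet, a nonnegative combination of $(1,\al_1)$ and $(1,\al_2)$ at a vertex), and taking the ratio of coordinates yields $z=\al$ in Case~1 and $z$ between $\al_1$ and $\al_2$ in Case~2. The one place you go beyond the paper is the strictness in Case~2, which the paper dispatches with the single word ``intermediate''; you correctly observe it hinges on the implicit convention --- visible in the open versus closed intervals of Lemma~\ref{lem.m} --- that when the gradient is parallel to an adjacent facet normal the solution is classified as lying on that facet rather than on the vertex, which is a small but genuine gap the paper leaves unstated.
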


\begin{proof}
In the first case, the objective function of the convex program (\ref{CP-2D}),
\[ g = w_1 \log{y_1} + w_2 \log {y_2} \]
must be tangent to the facet at the solution point, say $(a, b)$.
Equating  the ratio of the partial derivatives of $g$ and the line $y_1 + \al y_2 = \bt$ w.r.t. $y_2$ and $y_1$, we get
\[ {{a/w_1} \over {b/w_2}} = \al .\]
But the l.h.s. is $\ga_1/\ga_2 = z$, thereby giving $z = \al$.

In the second case, the tangent to $g$ at the solution must be intermediate
between the slopes of the adjacent facets, giving $\al_1 < z < \al_2$.
\end{proof}

Our algorithm will conduct a binary search on $z$, on the interval $[\al^1, \al^2]$,
to find the right face on which the solution lies. The test given in the next lemma helps determine, in each iteration,  
if the current face is the right.

\begin{lemma}
\label{lem.m}

\begin{enumerate}
\item
The solution to game $\CG$ lies on
the facet $y_1 + \al y_2  \leq  \bt$, having endpoints $(a_1, b_1)$ and $(a_2, b_2)$, with $a_1 < a_2$,  iff
\[ {w_1 \over {w_1 + w_2}} \in \left[{a_1 \over \bt} , {a_2 \over \bt} \right] .\]
\item
The solution to market $\CM$ lies on
the vertex $(\al_1, \al_2)$, having coordinates $(a, b)$, which is at the intersection of 
facets $y_1 + \al_1 y_2  \leq  \bt_1$ and $y_1 + \al_2 y_2  \leq  \bt_2$, iff
\[ {w_1 \over {w_1 + w_2}} \in \left({a \over \bt_2} , {a \over \bt_1} \right) .\]
\end{enumerate}
\end{lemma}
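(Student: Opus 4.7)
The plan is to reduce both parts to the one-dimensional optimization of $g(y_1,y_2) = w_1\log y_1 + w_2\log y_2$ along a supporting line of $\CN$, using concavity to promote local optimality to global optimality.

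First I would carry out the key one-dimensional computation. Fix a useful facet of $\CN$ lying on the line $y_1 + \alpha y_2 = \beta$, and maximize $g$ over this line. Substituting $y_1 = \beta - \alpha y_2$ into $g$ and setting the derivative to zero gives $-\frac{w_1 \alpha}{\beta - \alpha y_2} + \frac{w_2}{y_2} = 0$, which solves to
\begin{equation*}
y_1^\star \;=\; \frac{w_1\,\beta}{w_1+w_2}, \qquad y_2^\star \;=\; \frac{w_2\,\beta}{\alpha(w_1+w_2)}.
\end{equation*}
Thus on this line the ratio $y_1^\star/\beta$ equals $w_1/(w_1+w_2)$. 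Because $\CN$ lies in the halfspace $y_1+\alpha y_2 \le \beta$ (the line contains a facet of $\CN$), and because $g$ is strictly concave, the unconstrained maximum over this halfspace is attained on the line, hence at $(y_1^\star,y_2^\star)$ whenever this point is feasible.

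For part (1), the facet has $y_1$-range $[a_1,a_2]$. The point $(y_1^\star, y_2^\star)$ lies in the facet iff $a_1 \le y_1^\star \le a_2$, i.e.\ $\frac{a_1}{\beta} \le \frac{w_1}{w_1+w_2} \le \frac{a_2}{\beta}$. When this holds, $(y_1^\star,y_2^\star)$ is feasible in $\CN$ and is the maximizer of $g$ on the halfspace containing $\CN$, so it is the solution. Conversely, if the solution lies on the facet, then by Lemma~\ref{lem.z} it equates the tangent-slope condition $z=\alpha$, so it must coincide with $(y_1^\star,y_2^\star)$, forcing the ratio to lie in $[a_1/\beta,\,a_2/\beta]$.

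For part (2), parametrize the supporting lines through the vertex $(a,b)$ as $y_1 + \alpha y_2 = a + \alpha b$ for $\alpha \in [\alpha_1,\alpha_2]$; at the two endpoints this line is the adjacent facet. By Lemma~\ref{lem.z}, the vertex is the solution iff $z\in(\alpha_1,\alpha_2)$, equivalently iff there is some $\alpha\in(\alpha_1,\alpha_2)$ such that $(a,b)$ is the maximizer of $g$ on the corresponding line. By the formula above, this happens iff $a = \frac{w_1}{w_1+w_2}(a+\alpha b)$, i.e.\
\begin{equation*}
\frac{w_1}{w_1+w_2} \;=\; \frac{a}{a+\alpha b}.
\end{equation*}
Since $b>0$, the map $\alpha\mapsto a/(a+\alpha b)$ is continuous and strictly decreasing; as $\alpha$ ranges over $(\alpha_1,\alpha_2)$, its image is the open interval $\bigl(\tfrac{a}{a+\alpha_2 b},\,\tfrac{a}{a+\alpha_1 b}\bigr) = \bigl(\tfrac{a}{\beta_2},\,\tfrac{a}{\beta_1}\bigr)$, using $a+\alpha_i b = \beta_i$ and $\beta_1<\beta_2$ (since $\alpha_1<\alpha_2$ and $b>0$). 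This gives the stated iff. The one subtle point, which I expect to be the main thing to be careful about, is the strict-versus-weak inequality bookkeeping between parts (1) and (2): the facet condition uses closed intervals and the vertex condition uses open intervals, matching the fact that a vertex is the \emph{shared} endpoint of two facet-intervals and is counted in part (2) only when $z$ is strictly interior to $(\alpha_1,\alpha_2)$.
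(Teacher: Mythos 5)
Your proof is correct and follows essentially the same route as the paper's: both reduce to the tangency/first-order condition of Lemma~\ref{lem.z} on the line $y_1+\al y_2=\bt$ and translate it into the ratio $w_1/(w_1+w_2)=y_1^\star/\bt$. The paper abbreviates the computation by substituting $w_i=f_i/\ga_i$ and $\al=\ga_1/\ga_2$ directly, whereas you rederive the same maximizer $(y_1^\star,y_2^\star)$ by explicit calculus and add the concavity argument to close the converse direction, so your write-up is somewhat more explicit but not a different method.
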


\begin{proof}
In the first case, substituting $w_i = f_i/\ga_i$ and $\al = \ga_1/\ga_2$ (this follows from Lemma \ref{lem.z}), we get
\[ {w_1 \over {w_1 + w_2}} =  {f_1 \over {f_1 + \al f_2}}  =  {f_1 \over \bt} \in \left[{a_1 \over \bt} , {a_2 \over \bt} \right] . \]
For the other direction, if $w_1 / (w_1 + w_2)$ lies in the interval given,
then by the equation given above, $f_1 \in [a_1, a_2]$, thereby showing that the solution lies on the facet $y_1 + \al y_2  \leq  \bt$.

In the second case, by Lemma \ref{lem.z}, $\ga_1/\ga_2 \in (\al_1, \al_2)$, and this leads to the interval in which
$w_1/(w_1 + w_2)$ lies. The proof of the other direction also follows in the same manner.
\end{proof}

The operation in Step \ref{step.repeat} in Algorithm
\ref{alg.main}, $\lfloor x \rfloor_{\kappa}$, 
truncates $x$ to accuracy $2^{-\kappa}$ where $\kappa$ is defined in the proof of Lemma \ref{lem.iterations}.

\begin{lemma}
\label{lem.iterations}
Binary search executes polynomial in $n$ iterations.
\end{lemma}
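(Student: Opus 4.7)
The plan is to bound the bit-lengths of the useful-facet slopes of $\CN$ and then apply a standard separation-gap argument for binary search over rationals.

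First, I would invoke Cramer's rule on bases of the constraint matrix defining $\Pi \subset \R^{n+2}$. Every vertex of $\Pi$ has coordinates that are ratios of subdeterminants of an $(n+2) \times (n+2)$ submatrix of $[\AAA \mid \bb_1 \mid \bb_2 \mid \dd]$. After clearing denominators, these subdeterminants are integers bounded in absolute value by $2^P$ for some polynomial $P$ in $n$, $m$, and the maximum bit-length $L$ of an entry of program (\ref{CP-y}). Each useful facet of $\CN$ is the projection of a $1$-face of $\Pi$ whose endpoints are vertices, and the slope parameter $\al$ in the representation $y_1 + \al y_2 \leq \bt$ is a ratio of differences of such vertex coordinates. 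Hence $\al$ is a rational with numerator and denominator of bit-length $O(P)$, and likewise for $\bt$.

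Second, I would derive the separation gap: any two distinct useful slopes satisfy $|\al^{(i)} - \al^{(j)}| \geq 2^{-2P}$, and the initial search range $\al^2 - \al^1$ is at most $2^P$. Setting the truncation parameter $\kappa = 3P$ guarantees that $\lfloor \cdot \rfloor_{\kappa}$ preserves every useful slope exactly, and that the midpoint computed in each iteration is accurate to within $2^{-\kappa}$, strictly smaller than the minimum inter-slope gap. Standard halving therefore reduces the interval length to below $2^{-2P}$ after $O(P)$ iterations, at which point the interval contains at most one useful slope.

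Third, once this holds, Procedure 1 applied to any value in the interval returns either the unique facet whose slope lies in the interval or one of its two adjacent vertices; the test of Lemma \ref{lem.m} (combined with Lemma \ref{lem.z}) then identifies in one more step which face carries the solution. The total iteration count is $O(P)$, polynomial in the input size.

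The main obstacle is to argue that truncation of midpoints to $\kappa$ bits never pushes the binary search in the wrong direction. This follows from the choice $\kappa = 3P$: the truncation error $2^{-\kappa}$ is strictly smaller than both the minimum inter-slope gap $2^{-2P}$ and the distance from the true $z = \ga_1/\ga_2$ to the nearest slope boundary, except in the termination case when $z$ itself equals a slope — precisely the scenario detected by the equality condition in Lemma \ref{lem.m}. A minor care point is that the bit-length bound $P$ for SLNB2 games (where coefficients are polynomially bounded) depends on $n$ and $m$ alone, which is what promotes the algorithm from polynomial to strongly polynomial in that subclass.
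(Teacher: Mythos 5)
Your proposal is correct and takes essentially the same approach as the paper: bound the bit-length of facet slopes via Cramer's rule, set the truncation parameter $\kappa$ accordingly, and conclude that binary search on the interval $[\al^1, \al^2]$ halts after $O(\kappa)$ halvings. You are somewhat more explicit than the paper about the separation-gap argument (distinct useful slopes differ by at least $2^{-2P}$) and about the role of truncation in the termination analysis, and you correctly flag that the bound must be independent of the right-hand-side $\dd$ for the strongly polynomial conclusion in SLNB2 — a point the paper leaves implicit — but the underlying argument is the same.
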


\begin{proof}
First, we place an upper bound on the size of the interval $[\al^1, \al^2]$. By Cramer's rule, the number of bits in the solution to
LP (\ref{LP-alpha}) is polynomial in $n$. Let this number be $\kappa$. Therefore, for each of the facets, $\al$ can be written in
$\kappa$ bits. However, we do not know where the binary point lies. So, let us assume that we will only deal with $2 \kappa$ bit
long numbers, with $\kappa$ bits before and $\kappa$ bits after the binary point. The operation in Step \ref{step.repeat} in Algorithm
\ref{alg.main}, $\lfloor x \rfloor_{\kappa}$, 
is meant to truncate $x$ to this form. Therefore, the size of the interval is bounded by $2^{2 \kappa}$. 
Hence binary search will execute $O(\kappa)$, i.e., polynomial in $n$ iterations.
\end{proof}

\bigskip

\noindent

\fbox{
\begin{algorithm}{\label{alg.main} (Binary Search)}

\step
\label{step.phase2}
{\bf (Initialization:)} 
$l \ \lefta \ \al^1$ \ \ and  \ \ $h \ \lefta \ \al^2$. \\
\[ \mbox{Let} \ r \la {w_1 \over {w_1 + w_2}}  . \]

\step
\label{step.repeat}
\[ \al \ \lefta \ \lfloor {{l + h} \over 2} \rfloor_{\kappa} . \]

\step
Using Procedure 1 (Section \ref{sec.P1}), determine if $\al$ lies on:

\begin{description}
\item
{\bf Case 1:} \  \  A facet, say $y_1 + \al y_2 \leq \bt$, with
endpoints $(a_1, b_1)$ and $(a_2. b_2)$. \\

If $r < (a_2 / \bt)$ then 
$l  \ \la \ \al$ and go to step \ref{step.repeat}. \\

Else if $r > (a_1 / \bt)$ then 
$h \ \la \ \al$  and go to step \ref{step.repeat}. \\

\[ \mbox{Else if} \ r \in \left[{a_1 \over \bt} , {a_2 \over \bt} \right], \
\mbox{then solve the following 2 equations for} \ y_1 \ \mbox{and} \ y_2: \] 
\[ y_1 + \al y_2 = \bt  \ \ \ \ \mbox{and} \ \ \ \ 
 {{y_1 /w_1} \over {y_2 /w_2}}  =  \al .\]
If the solution is $y_1 = a, y_2 = b$, output the solution to game $\CG$:  \\
$v_1 = a + c_1$ and $v_2 = b + c_2$, and HALT.

\item
{\bf Case 2:} \  \  A vertex, say $(\al_1, \al_2)$, with coordinates $(a, b)$, \\

If $r \leq \ (a / \bt_2)$ then 
$l \ \lefta \ \al_2$ and go to step \ref{step.repeat}. \\

Else if $r \geq  (a / \bt_1)$ then 
$h \ \lefta \ \al_1$  and go to step \ref{step.repeat}. \\

\[ \mbox{Else if} \ r \in \left({a \over \bt_2} , {a \over \bt_1} \right) ,\]
then output the solution to game $\CG$: $v_1 = a + c_1$ and $v_2 = b + c_2$, and HALT.

\end{description}

\step
End.

\end{algorithm}
}

\bigskip

\begin{lemma}
\label{lem.bs}
Algorithm \ref{alg.main} performs binary search correctly.
\end{lemma}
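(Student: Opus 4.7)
The plan is to verify two things. First, I would establish a loop invariant stating that the true optimal slope $z$ always lies in the current interval $[l, h]$. Second, I would verify that when the algorithm halts (in the ``Else if'' branches), it outputs the correct solution. Combined with Lemma \ref{lem.iterations} for termination in a polynomial number of iterations, this yields the claim. The base case of the invariant is immediate: after step \ref{step.phase2}, $[l,h] = [\al^1, \al^2]$, and Lemma \ref{lem.z} together with the definition of the useful facets in Section \ref{sec.faces} forces $z \in [\al^1, \al^2]$.

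For correctness at halting, I would combine Lemma \ref{lem.m} with Lemma \ref{lem.z}. In Case 1, the condition $r \in [a_1/\bt, a_2/\bt]$ certifies (by Lemma \ref{lem.m}) that the solution lies on the facet $y_1 + \al y_2 = \bt$. By Lemma \ref{lem.z} we then have $z = \al$ at the solution, i.e., $(y_1/w_1)/(y_2/w_2) = \al$; solving this tangency condition together with the facet equation uniquely determines the solution $(a, b)$ on the facet, and we recover the bargaining solution via the change of variables $v_i = y_i + c_i$. In Case 2, Lemma \ref{lem.m} directly certifies that the solution is the vertex $(a, b)$, and the output follows.

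The main obstacle is showing that each update $l \la \al$ (or $\al_2$) and $h \la \al$ (or $\al_1$) preserves the invariant. My plan is a monotonicity argument along the boundary of $\CN$. Parametrizing the traversal from $(l_1, l_2)$ to $(h_1, h_2)$ over the useful facets, the quantity $\hat z(y_1, y_2) := y_1 w_2 / (y_2 w_1)$ strictly decreases (since $y_1$ strictly decreases and $y_2$ strictly increases along this path), while the local slope weakly increases. By Lemma \ref{lem.z}, the optimum is the unique point along this traversal where $\hat z$ matches the local slope. When the test of Lemma \ref{lem.m} fails at the tested face, each failing $r$-inequality translates into a comparison of $\hat z$ on that face with the current $\al$, which by monotonicity identifies the side of the face containing the optimum. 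For example, in Case 2 the algebraic chain
\[ \hat z(a,b) \leq \al_2 \ \Leftrightarrow\ a w_2 \leq \al_2 b w_1 \ \Leftrightarrow\ a(w_1+w_2) \leq w_1 \bt_2 \ \Leftrightarrow\ r \geq a/\bt_2 \]
shows that $r \leq a/\bt_2$ is equivalent to $\hat z(a,b) \geq \al_2$, forcing the optimum onto a face of slope $\geq \al_2$ and justifying the update $l \la \al_2$. The other Case 2 update and both Case 1 updates follow by analogous calculations, once the orientation convention for the endpoints $(a_1, b_1), (a_2, b_2)$ along the traversal is pinned down; that orientation bookkeeping is the fiddly part but is otherwise routine.
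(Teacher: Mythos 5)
Your proof is correct and follows essentially the same approach as the paper: maintain the invariant that the optimal slope $z$ lies in $[l,h]$, test the current face with Lemma~\ref{lem.m}, and use monotonicity of a ratio along the boundary of $\CN$ to decide which side to recurse into. You are actually somewhat more careful than the paper's own proof, which leaves the side-selection to ``one can check'' and contains a slip (it asserts that ``$w_1/(w_1+w_2)$ decreases monotonically'' along the boundary --- but that ratio is constant; the monotone quantity is the threshold your $\hat z = y_1 w_2/(y_2 w_1)$ tracks), so your explicit algebraic chain is a genuine improvement in rigor over the published argument.
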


\begin{proof}
We first justify restricting search to $\al$-values in the range $[\al^1, \al^2]$, i.e., the faces encountered on the boundary of
$\CN$ in moving from $(l_1, l_2)$ to $(h_1, h_2)$ by increasing the socond coordinate, as stated in Section \ref{sec.faces}.
Observe that starting at $(l_1, l_2)$ and moving in the other direction on the boundary of $\CN$, the second coordinate must
decrease and the first either remains the same or decreases, hence decreasing the objective function value of (\ref{CP-LNB2}).
Similarly, moving beyond $(h_1, h_2)$ on the boundary of $\CN$, the first coordinate must
decrease and the second either remains the same or decreases, again decreasing the objective function value of (\ref{CP-LNB2}).

By the necessary and sufficient conditions established in Lemma \ref{lem.m}, the algorithm can determine if the current
face is the right one or not. Next, observe that 
value of $w_1 /  (w_1 + w_2)$ decreases monotonically in moving from $(l_1, l_2)$ to $(h_1, h_2)$ on the boundary of $\CN$.
With this observation, one can check that if the current face is not the right one, in each case, the algorithm moves to the side of 
this face that contains the right face.
\end{proof}

Hence we get:

\begin{theorem}
\label{thm.LNB2}
Every game in LNB2 has a rational solution; moreover, such a solution can be found in polynomial time
using only an LP solver.
\end{theorem}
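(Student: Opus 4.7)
The plan is to package the machinery of Sections \ref{sec.faces} and \ref{sec.bs} into a single argument: given a feasible instance $\CG$ of LNB2, run Algorithm \ref{alg.main} and show that it terminates in polynomially many iterations, uses only LP oracle calls, and outputs a rational solution. Feasibility itself is first tested by solving LP (\ref{lp.feasible}), and the change of variables leading to (\ref{CP-y}) normalizes the disagreement point to the origin, which is the form the algorithm expects.

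For correctness and iteration count I would simply invoke Lemmas \ref{lem.bs} and \ref{lem.iterations}. Lemma \ref{lem.z} pins down the correct face of $\CN$ in terms of the ratio $z = \ga_1/\ga_2$, and Lemma \ref{lem.m} converts this into an interval test on $r = w_1/(w_1+w_2)$ that depends only on the endpoints of the current face. Combined with the monotonicity of $r$ along the useful portion of the boundary of $\CN$ verified inside the proof of Lemma \ref{lem.bs}, these lemmas justify the binary search direction at every iteration, so Algorithm \ref{alg.main} converges to the unique correct face. The bound on the number of iterations, via Cramer's rule applied to LP (\ref{LP-alpha}), is polynomial in $n$.

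For the claim that only an LP solver is needed, every subroutine invoked is itself an LP. The preprocessing to obtain $(l_1,l_2)$, $(h_1,h_2)$ and $\al^1, \al^2$ reduces through Procedures 2 and 3 to a constant number of LPs over $\Pi$ and the auxiliary system LP (\ref{LP-alpha}). Inside the loop, Procedure 1 classifies the current $\al$ (facet versus vertex region) by solving LP (\ref{alpha}), the two variants of LP (\ref{min}), and, when needed, LP (\ref{LP-alpha}); all of these have polynomial size in the original data.

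It remains to argue rationality. If the algorithm terminates in Case 2, the output $(a,b)$ is a vertex of the rational polytope $\CN$, hence rational. If it terminates in Case 1, $(a,b)$ solves the $2\times 2$ linear system
\[ y_1 + \al y_2 = \bt, \qquad w_2\, y_1 = \al\, w_1\, y_2, \]
whose coefficients come from an LP basic optimum and are therefore rational, so $(a,b)\in\Q^2$; setting $v_i = y_i + c_i$ preserves rationality. The step I expect to need the most care is verifying that the truncation $\lfloor \cdot \rfloor_\kappa$ in Step \ref{step.repeat} does not compromise the test of Lemma \ref{lem.m}: with $\kappa$ chosen to match the Cramer's rule bound, every useful face has an $\al$-interval of width bounded below by a function of $\kappa$, so the truncated probes are fine enough to eventually certify the correct face and feed it into the exact $2\times 2$ solve.
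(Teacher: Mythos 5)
Your proposal is correct and follows essentially the same route as the paper: Theorem \ref{thm.LNB2} is stated there as an immediate corollary of Algorithm \ref{alg.main} together with Lemmas \ref{lem.z}, \ref{lem.m}, \ref{lem.iterations}, and \ref{lem.bs}, with feasibility checked by LP (\ref{lp.feasible}) and all subroutines reducing to LPs over $\Pi$ or LP (\ref{LP-alpha}). Your explicit rationality argument (vertex of a rational polytope in Case~2; a $2\times 2$ rational linear system in Case~1, where the truncated probe~$\al$ coincides with the facet's exact $\al$-value) is a correct unpacking of what the paper leaves implicit.
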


Next assume that the coefficients in the constraints of convex program (\ref{CP-LNB2}) are ``small'', i.e., polynomially bounded in $n$.
Then all LP's that need to be solved will also have ``small'' coefficients
(the objective function and right hand side don't need to be ``small''). Such LP's can be solved in strongly polynomial time \cite{tardos.LP}.
By Lemma \ref{lem.iterations}, binary search will execute only polynomial in $n$ iterations. 
Hence we get:

\begin{theorem}
\label{thm.SLNB2}
Every game in SLNB2 can be solved in strongly polynomial time.
\end{theorem}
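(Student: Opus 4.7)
The plan is to run Algorithm \ref{alg.main} unchanged on a game $\CG \in$ SLNB2 and argue that every step executes in strongly polynomial time. The two ingredients I would invoke are: (i) Tardos's theorem \cite{tardos.LP}, stating that any LP whose constraint-matrix entries have bit-size polynomially bounded in the input dimensions can be solved in strongly polynomial time, irrespective of the bit-complexities of the objective and the right-hand side; and (ii) Lemma \ref{lem.iterations}, which already bounds the number of binary-search iterations by a polynomial in $n$. The bulk of the work is bookkeeping: verify that each LP invoked does fall under Tardos's hypothesis, and that the purely arithmetic portions of the algorithm cost only constantly many operations per iteration on polynomially-sized rationals.

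First I would catalog the LPs called, directly or through Procedures 1--3 and the preprocessing that identifies $(l_1,l_2)$ and $(h_1,h_2)$: the feasibility LP (\ref{lp.feasible}); LP (\ref{alpha}) and LP (\ref{min}); the min/max versions of LP (\ref{LP-alpha}); the LPs used in Procedures 2 and 3; and the pair of LPs that compute $(l_1,l_2)$, $(h_1,h_2)$ from $\Pi$. For each, the constraint-matrix entries come from $\AAA$, $\bb_1$, $\bb_2$ together with $\pm 1$'s (and, for LP (\ref{alpha}), the current value of $\al$). The SLNB2 hypothesis gives polynomially bounded $\AAA,\bb_1,\bb_2$, and the truncation $\lfloor \cdot \rfloor_\kappa$ in Step \ref{step.repeat} keeps $\al$'s bit-size bounded by $\kappa$ throughout the search. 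The right-hand side vector $\dd = \ee - c_1\bb_1 - c_2\bb_2$ and the objective coefficients (coming from $\cc$ and, at termination, from ratios like $a/\bt$) may be arbitrarily large, but Tardos's theorem tolerates this. Therefore each LP solve costs strongly polynomial time.

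The remaining work inside Algorithm \ref{alg.main} is elementary arithmetic: computing $r = w_1/(w_1+w_2)$ and the midpoint $\al$, comparing $r$ against $a_i/\bt$, updating $l,h$, truncating, and finally solving the explicit $2\times 2$ linear system in $y_1,y_2$ at termination. Each of these is a constant number of rational operations, and the Cramer-rule estimate in the proof of Lemma \ref{lem.iterations} ensures the numbers involved have bit-size $O(\kappa) = \text{poly}(n)$. Combining the polynomial iteration bound of Lemma \ref{lem.iterations} with the strongly polynomial cost per iteration gives the theorem. The one point I would check carefully is that LP (\ref{LP-alpha}) still satisfies the ``small coefficients'' hypothesis of \cite{tardos.LP}, since it is built from an optimal primal solution $(\xx^*, y_1^*, y_2^*)$ via complementary slackness; but its constraint matrix only recycles entries of $\AAA^T, \bb_1, \bb_2$, so the SLNB2 bound carries over verbatim. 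That is the only delicate spot; everything else is a straightforward combination of Tardos's theorem with Algorithm \ref{alg.main}.
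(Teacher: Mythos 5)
Your proof is correct and follows essentially the same route as the paper: invoke Tardos's theorem for the LP solves and Lemma~\ref{lem.iterations} for the iteration count, then check the bookkeeping. The one slip worth fixing is the attribution of where the running $\al$ actually enters a constraint matrix: in LP~(\ref{alpha}) it appears only in the \emph{objective} ($\max\ y_1 + \al y_2$), which Tardos's theorem handles for free; it is LP~(\ref{min}) that picks up the constraint $y_1 + \al y_2 = \bt$, and hence it is there (not in LP~(\ref{alpha})) that $\al$ lands in the matrix. Your handling of that case is nonetheless right: the $\lfloor\cdot\rfloor_\kappa$ truncation caps $\al$ at $O(\kappa)=\mathrm{poly}(n)$ bits, so the encoding length of every constraint matrix stays $\mathrm{poly}(m,n)$, and Tardos's bound (number of arithmetic operations polynomial in the encoding length of the matrix, independent of $b$ and $c$) still yields a strongly polynomial solve. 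That is a point the paper glides over with the phrase ``all LPs will also have small coefficients,'' and your version of Tardos's hypothesis (bit-size, rather than magnitude, of matrix entries polynomially bounded) is the reading that actually covers LP~(\ref{min}); good that you flagged it. Your examination of LP~(\ref{LP-alpha}) is also correct: its matrix is assembled from $\AAA^T, \bb_1, \bb_2$ entries only, with the primal optimal solution choosing \emph{which} rows appear rather than scaling them, so the SLNB2 bound carries over unchanged.
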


In particular, the game \G2, which lies in SLNB2, can be solved in strongly polynomial time.
\cite{JV.EG} give examples of Eisenberg-Gale markets with 3 buyers which do not have rational solutions.
Hence, instances of the corresponding Nash bargaining games, with zero disagreement utilities, do not possess rational solutions;
this includes the modification of \G2 \ to a 3-player game, with 3 source-sink pairs.

\section{A Strongly Polynomial Algorithm for the Game ADNB2}
\label{sec.ADNB2}

We first give the notion of a {\em flexible budget market}, introduced 
in \cite{va.NB}. This is a natural variant on a Fisher market -- the main difference is that instead of having a fixed amount
of money to spend in the market, buyers have a (strict) lower bound on the amount of utility they wish to derive, and at any given prices,
they want to derive it in the most cost-effective manner. Thus, the money spent by buyers is a function of prices of goods.
The object again is to find market clearing or equilibrium prices.

Next, we give a precise definition of the flexible budget market, $\CM$, that \ADNB2 \ reduces to (the reduction, derived by applying
KKT conditions, appears in Theorem 4 in \cite{va.NB}).
The goods and utility functions of the two buyers are as in \ADNB2 \ and each buyer $i$ has a parameter $c_i$ giving
a strict lower bound on the amount of utility she wants to derive. 
Given prices $\pp$ for the goods, define the {\em maximum bang-per-buck} of buyer $i$ to be
\[ \ga_i = max_j \left\{ {u_{ij} \over p_j} \right\} .  \]
Now, the amount of money buyer $i$ spends is defined to be 
\[m_i = 1 + {c_i \over \ga_i} . \]

\subsection{The algorithm}

We will first renumber the goods.
Compute $u_{1j} / u_{2j}$ for each good $j$, sort the goods in decreasing order of this ratio and partition by 
equality. For the purpose of this algorithm, it will suffice to replace each partition by one good. Consider a partition and
compute $min_j \{u_{1j} \}$ for goods $j$ in this partition. Assume the minimum is attained by $u_{1k}$. Then the utilities of the
two players for this new good, say $g'$ will be $u_{1k}$ and $u_{2k}$, respectively. Next, we need to compute the number of units
of $g'$ that are available. Each good $j$ in the partition will be represented by $u_{1j}/u_{1k}$ units of $g'$. The sum over all goods in
the partition is the total number of units of this good. 
Let us assume that after this transformation, we have $n$ goods available, $1, 2, \ldots, n$ and the amount of good $j$ is $b_{j}$
and the goods are numbered in decreasing order of $u_{1j}/u_{2j}$.

Next, we test for feasibility, i.e., we need to determine whether the two players can be given baskets providing $c_1$
and $c_2$ utility, respectively, without exhausting all goods. Clearly, the most efficient way of doing this is to give player 1
goods from the lowest index and to give player 2 goods from the highest index. Assume that player 1 needs to be given all the
available goods $1, 2, \ldots k_1 -1$ and an amount $x$ of good $k_1$ in order to make up $c_1$ utility. Next, assume that player 2
needs to be given all available goods $n, n-1, \ldots, k_2 +1$ and an amount $y$ of good $k_2$ to make up $c_2$ utility.
Then, the game and the market are feasible iff $k_1 < k_2$ or $k_1 = k_2$ and $x + y < b_{k_1}$.

Finally, assume that the given market is feasible and let us find an equilibrium for it. Since each buyer must get a utility maximizing
bundle of goods, for each good $j$ that is allocated to player $i$, 
\[ \ga_i = {u_{ij} \over p_j} \]
and for each good $j$ that is not allocated to player $i$,
\[ \ga_i \geq {u_{ij} \over p_j} . \]
This leads to two cases for the equilibrium allocation:
\begin{itemize}
\item
{\bf Case 1:}
There is a $k$, $1 \leq k \leq n$ such that player 1 gets goods $1, 2, \ldots, k$ and player 2 gets goods
$k+1, k+2, \ldots, n$.
\item
{\bf Case 2:}
There is a $k$, $1 \leq k \leq n$ such that player 1 gets goods $1, 2, \ldots, k-1$, player 2 gets goods
$k+1, k+2, \ldots, n$, and they both share good $k$.
\end{itemize}
Since the equilibrium prices are unique, only one of these $O(n)$ possibilities holds. We will check them all to
determine which one it is.

{\bf Case 1:}
Let $G_1$ consist of the first $k$ good and $G_2$ consist of the rest. Then,
\[ \ga_1 = {u_{1j} \over p_j} \ \ \mbox{for} \ \ j \in G_1 \ \ \mbox{and} \ \ \ga_2 = {u_{2j} \over p_j} \ \ \mbox{for} \ \ j \in G_2 . \]
Let $\ga_1 = 1/x$ and $\ga_2 = 1/y$. The total money spent by player 1 is
\[ m_1 = \sum_{j \in G_1} {p_j b_j} = x \sum_{j \in G_1} {u_{1j} b_j} = w_1 + c_1 x .\]
Similarly, the total money spent by player 2 is
\[ m_2 = \sum_{j \in G_2} {p_j b_j} = y \sum_{j \in G_2} {u_{2j} b_j} = w_2 + c_2 y .\]
Solve these equations for $x$ and $y$ and compute the prices of goods $p_j$. If with these prices, each player gets a utility maximizing 
bundle of goods, i.e., the 2 conditions given above hold, these are equilibrium prices and allocations.

{\bf Case 2:}
Since good $k$ is allocated to both buyers, 
\[ \ga_1 = {u_{1k} \over p_k}  \ \ \mbox{and} \ \ \ga_2 = {u_{2k} \over p_k}  . \]
Let $u_{1k}/u_{2k} = \al$ and $\ga_1 = 1/x$. Then $\ga_2 = 1/(\al x)$.
Let $G_1$ consist of the first $k$ good and $G_2$ consist of the rest. Then the total money spent by both players is
\[ m_1 + m_2 = \sum_{j \in G} {p_j b_j} = x (\sum_{j \in G_1} {u_{1j} b_j} + \sum_{j \in G_2} {\al u_{2j} b_j}) = w_1 + c_1 x + w_2 + c_2 \al x .\]
Again, solve for $x$, compute prices of goods and check if the conditions for equilibrium are satisfied.

Observe that \ADNB2 \ is not in SLNB2, since the $u_{ij}$'s are not restriced to be polynomially bounded in $n$. Even so,
we get:

\begin{theorem}
\label{thm.ADNB2}
There is a combinatorial strongly polynomial algorithm for solving \ADNB2.
\end{theorem}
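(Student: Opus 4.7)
The plan is to verify correctness and strong polynomiality of the algorithm described above, resting on three claims: (a) the reduction from \ADNB2 to the flexible budget market $\CM$ is valid (this is Theorem~4 of \cite{va.NB} and I would invoke it directly); (b) every equilibrium allocation of $\CM$ conforms to one of the two monotone structures enumerated in the algorithm; and (c) each of the $O(n)$ candidate configurations can be tested in strongly polynomial time.

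The key step is claim (b), which I would formulate as a structural lemma: if the goods are indexed so that $u_{1j}/u_{2j}$ is nonincreasing in $j$, then at any equilibrium there is an index $k$ with player~$1$ receiving (portions of) only goods $1,\ldots,k$ and player~$2$ receiving (portions of) only goods $k,\ldots,n$. The proof uses the max-bang-per-buck condition: if good $j$ were assigned to player~$1$ and some good $j' > j$ to player~$2$, then $u_{1j}/p_j = \gamma_1 \geq u_{1j'}/p_{j'}$ and $u_{2j'}/p_{j'} = \gamma_2 \geq u_{2j}/p_j$; multiplying yields $u_{1j}u_{2j'} \geq u_{1j'}u_{2j}$, i.e.\ $u_{1j}/u_{2j} \geq u_{1j'}/u_{2j'}$, which contradicts the sorted order unless $j' = j$ (the shared-good boundary). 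The preprocessing step that merges all goods with a common ratio $u_{1j}/u_{2j}$ into a single composite good then guarantees at most one shared good at equilibrium, yielding exactly the two cases enumerated.

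For claim (c), the two decoupled linear equations in Case~$1$ and the single linear equation in Case~$2$ each admit closed-form rational solutions, and computing the implied prices $p_j$ and verifying the max-bang-per-buck inequalities costs $O(n)$ arithmetic operations per candidate. The preprocessing sort requires $O(n\log n)$ comparisons, each reducible to a single cross-multiplication $u_{1j}u_{2j'}$ versus $u_{1j'}u_{2j}$. Uniqueness of equilibrium prices (on the support of the allocation) in the linear Arrow--Debreu setting guarantees that exactly one (case, $k$) combination passes the verification; this ensures both termination and correctness.

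For strong polynomiality I would argue that all intermediate quantities have bit-length polynomial in the input size: the sums $\sum_{j\in G_i} u_{ij}b_j$ add only $O(\log n)$ to the bit-length of a single entry, and the scalars $x, y$ are quotients of two such polynomially-sized quantities. Hence the total running time is $O(n^2)$ arithmetic operations on polynomially-sized rationals, independent of the magnitudes of the $u_{ij}$ and $c_i$, and the algorithm is combinatorial in the sense of performing a discrete search over the $O(n)$ candidate partitions. The main obstacle I anticipate is the structural lemma in claim (b); once the monotone allocation structure is established, the remainder reduces to straightforward enumeration and verification.
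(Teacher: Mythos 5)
Your proposal follows the paper's route exactly: reduce \ADNB2\ to the flexible budget market via Theorem~4 of \cite{va.NB}, sort goods by $u_{1j}/u_{2j}$ and merge ties, argue that any equilibrium allocation is a monotone cut of the sorted list with at most one shared good, and test each of the $O(n)$ candidate cuts by solving the linear budget equations and checking the bang-per-buck inequalities. The paper states the monotone-cut structure with only a one-line appeal to the bang-per-buck conditions, so your claim~(b) is the right thing to spell out and is more detail than the paper offers.

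There is, however, a direction error in claim~(b) as written. You posit player~1 receiving good $j$ and player~2 receiving good $j' > j$, but that is the configuration the lemma permits, and indeed your multiplied inequality $u_{1j}/u_{2j} \geq u_{1j'}/u_{2j'}$ is consistent with the decreasing sort, not a contradiction of it. The crossing you must rule out is $j' < j$ (player~2 holding a lower-indexed good than one player~1 holds): then the sort gives $u_{1j'}/u_{2j'} \geq u_{1j}/u_{2j}$, while multiplying $u_{1j}/p_j \geq u_{1j'}/p_{j'}$ and $u_{2j'}/p_{j'} \geq u_{2j}/p_j$ gives the reverse $u_{1j}/u_{2j} \geq u_{1j'}/u_{2j'}$; after the merging step the sort is strict, so equality forces $j = j'$. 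With the hypothesis corrected to $j' < j$, your argument is exactly right. One further small point: both you and the paper invoke uniqueness of equilibrium prices to conclude exactly one candidate passes, and neither argues this within the flexible-budget-market model itself (you attribute it to the Arrow--Debreu setting); this is an assertion in both treatments rather than a proved fact.
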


\section{The Circle Game}
\label{sec.circle-alg}

Using the KKT conditions of (\ref{CP-circle}) it is easy to show that the Nash bargaining solution $(x, y)$ satisfies 
the following equations:
\[ (2 y^2 - c_2y -1)^2  = c_1^2 (1 - y^2)  \ \ \ \  \mbox{and} \ \ \ \  x^2 + y^2 = 1 .\]

On the other hand, the problem also has a simple geometric solution. 
Let Q be the point on the unit circle in the positive orthant.
Let O denote the origin and P denote the point $(c_1, c_2)$.
Let $\theta_1$ be the angle made by PQ with the $x$-axis and $\theta_2$ be the angle made by OQ with the $y$-axis.

\begin{proposition}
\label{prop.circle}
Q is the Nash bargaining solution iff $\theta_1 = \theta_2$.
\end{proposition}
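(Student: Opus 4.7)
The plan is to recognize that the equality $\theta_1 = \theta_2$ is simply a geometric reformulation of the first-order (KKT) stationarity condition for the convex program (\ref{CP-circle}), and then to invoke concavity of the objective to conclude that this necessary condition is also sufficient.

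First I would argue that the bargaining solution $Q=(x,y)$ lies on the arc $x^2+y^2=1$ with $x,y>0$. Pareto optimality (guaranteed by the Nash axioms) forces $Q$ onto the circle, and feasibility of the game (the existence of some interior $\vv$ with $v_i>c_i$) together with the $\log$ terms in the objective forces $x>c_1\ge 0$ and $y>c_2\ge 0$. Hence the only active constraint at the optimum is $v_1^2+v_2^2\le 1$, and the Karush--Kuhn--Tucker stationarity condition reads
\begin{equation*}
\nabla\bigl[(v_1-c_1)(v_2-c_2)\bigr]\Big|_Q \;=\; \lambda\,\nabla\bigl[v_1^2+v_2^2\bigr]\Big|_Q
\end{equation*}
for some $\lambda>0$, i.e.\ $(y-c_2,\,x-c_1)=2\lambda\,(x,y)$. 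Eliminating $\lambda$, this is equivalent to the single scalar equation
\begin{equation*}
\frac{y-c_2}{x-c_1} \;=\; \frac{x}{y}.
\end{equation*}

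Next I would translate this equation into the angular condition of the proposition. By definition, the slope of segment $PQ$ (where $P=(c_1,c_2)$) is $\tan\theta_1 = (y-c_2)/(x-c_1)$. On the other hand, if $\phi$ denotes the angle that $OQ$ makes with the $x$-axis, then $\theta_2=\pi/2-\phi$, so $\tan\theta_2=\cot\phi=x/y$. Thus the KKT equation above is exactly $\tan\theta_1=\tan\theta_2$; since both angles lie in $(-\pi/2,\pi/2)$ (with $\theta_1$ possibly negative if $y<c_2$, but always in the principal branch), this is equivalent to $\theta_1=\theta_2$.

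For the converse direction I would appeal to the concavity of $\log(v_1-c_1)+\log(v_2-c_2)$ on the feasible set (intersected with $\{v_i>c_i\}$) together with convexity of the disk: the KKT stationarity condition is then sufficient for global optimality, so any point $Q$ on the circle in the positive orthant satisfying $\theta_1=\theta_2$ must be the unique Nash bargaining solution. The only subtlety, and the one point I would be most careful about, is ruling out spurious solutions of the polynomial equation $y(y-c_2)=x(x-c_1)$, $x^2+y^2=1$ that do not lie on the Pareto boundary in the positive orthant; this is handled by noting that on the Pareto-optimal arc from $(1,0)$ to $(0,1)$, the function $y(y-c_2)-x(x-c_1)$ is strictly increasing in arc length, so the KKT equation has a unique root there, which must be the solution guaranteed by Theorem~\ref{thm.nash}.
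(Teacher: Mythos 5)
Your proof is correct and follows essentially the same route as the paper's: both express the first-order optimality condition at $Q$ (the paper as tangency of the level hyperbola $(v_1-c_1)(v_2-c_2)=\alpha$ to the unit circle, you as KKT stationarity with the Lagrange multiplier eliminated) and then identify the resulting slope equation $(y-c_2)/(x-c_1)=x/y$ with $\tan\theta_1=\tan\theta_2$. Your write-up is a bit more careful about signs and about why the condition is also sufficient (concavity of the objective), whereas the paper gets sufficiency by invoking Nash's characterization theorem directly; the two are the same computation.
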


\begin{proof}
Let $(a, b)$ be the point Q and let R be the intersection of the vertical line passing through Q and the horizontal line passing
through P. Then the angle QPR is $\theta_1$.

The slope of the tangent to the hyperbola $(x - c_1) (y - c_2) = \al$ at $(x, y)$, which is obtained by
taking ratio of partial derivatives w.r.t. $y$ and $x$, is
\[ {{y - c_2} \over {x - c_1}} .\]
From the triangle PQR we get that
\[ \tan {\theta_1} = {{b - c_2} \over {a - c_1}} .\]
The slope of the tangent to the circle at Q is $\tan {\theta_2}$.

By Nash's theorem, Q is the Nash bargaining solution iff
the hyperbola $(x - c_1) (y - c_2) = \al$ is tangent to the unit circle at point Q, for a suitable value of $\al$.
Hence, by the above-stated facts, Q is the Nash bargaining solution iff $\theta_1 = \theta_2$.
\end{proof}

\section{Acknowledgment}
A special thanks to Inbal Talgam for pointing out that my previous approach to Theorem \ref{thm.LNB2} was needlessly complicated and 
to Leonard Schulman for the pointing out the attractive geometric solution to the circle game.


\begin{thebibliography}{CDV06}

\bibitem[BS00]{scarf}
W.~C. Brainard and H.~E. Scarf.
\newblock How to compute equilibrium prices in 1891.
\newblock {\em Cowles Foundation Discussion Paper}, (1270), 2000.

\bibitem[CDT09]{CDT}
X.~Chen, X.~Deng, and S.-H. Teng.
\newblock Settling the complexity of computing two-player {N}ash equilibria.
\newblock {\em Journal of the {ACM}}, 56(3), 2009.

\bibitem[CDV06]{CDV.EG}
D.~Chakrabarty, N.~Devanur, and V.~V. Vazirani.
\newblock New results on rationality and strongly polynomial solvability in
  {E}isenberg-{G}ale markets.
\newblock In {\em WINE}, 2006.

\bibitem[EG59]{eisenberg}
E.~Eisenberg and D.~Gale.
\newblock Consensus of subjective probabilities: the {P}ari-{M}utuel method.
\newblock {\em The Annals of Mathematical Statistics}, 30:165--168, 1959.

\bibitem[GLS88]{GLS}
M.~Grotschel, L.~Lovasz, and A.~Schirjver.
\newblock {\em Geometric Algorithms and Combinatorial Optimization}.
\newblock Springer-Verlag, 1988.

\bibitem[JV08]{JV.EG}
K.~Jain and V.~V. Vazirani.
\newblock Eisenberg-{G}ale markets: Algorithms and game-theoretic properties.
\newblock {\em Games and Economic Behavior (to appear)}, 2008.

\bibitem[Kal77]{Kalai.nonsymmetric}
E.~Kalai.
\newblock Nonsymmetric {N}ash solutions and replications of 2-person
  bargaining.
\newblock {\em International Journal of Game Theory}, 6(3):129--133, 1977.

\bibitem[Kel97]{Kelly}
F.~P. Kelly.
\newblock Charging and rate control for elastic traffic.
\newblock {\em European Transactions on Telecommunications}, 8:33--37, 1997.

\bibitem[Nas50a]{nash.bargain}
J.~F. Nash.
\newblock The bargaining problem.
\newblock {\em Econometrica}, 18:155--162, 1950.

\bibitem[Nas50b]{nash.equilibrium}
J.~F. Nash.
\newblock Equilibrium points in $n$-person games.
\newblock {\em Porceedings National Academy of Sciences}, 36:48--49, 1950.

\bibitem[Tar86]{tardos.LP}
E.~Tardos.
\newblock A strongly polynomial algorithm to solve combinatorial linear
  programs.
\newblock {\em Operations Research}, 34(2):250--256, 1986.

\bibitem[Vaz07]{va.chapter}
V.~V. Vazirani.
\newblock Combinatorial algorithms for market equilibria.
\newblock In N.~Nisan, T.~Roughgarden, E.~Tardos, and V.~V. Vazirani, editors,
  {\em Algorithmic Game Theory}, pages 103--134. Cambridge University Press,
  2007.

\bibitem[Vaz09]{va.NB}
V.~V. Vazirani.
\newblock {N}ash bargaining via flexible budget markets.
\newblock Submitted, 2009.

\bibitem[vNM44]{von}
J.~von Neumann and O.~Morgenstern.
\newblock {\em Theory of Games and Economic Behavior}.
\newblock Princeton University Press, 1944.

\end{thebibliography}
\end{document}